\def\testH1overH0{\begin{array}{c}
    {\mathcal{H}}_1 \\
    \gtrless \\
    {\mathcal{H}}_0
  \end{array}}
  \def\testD1overD0{{\scriptsize \begin{array}{l}
    {\mathcal{D}}_1 \\
    \gtrless \\
    {\mathcal{D}}_0
  \end{array}}}
\newtheorem{theorem}{Theorem}
\def\N0{N_{\textrm{0}}}
\begin{document}


\title[How to use our talents]{How to use our talents based on Information Theory - or spending time wisely}
\author[M. Chiani, May 2010]{
{Marco Chiani} \\
\smallskip
{University of Bologna \\
marco.chiani@unibo.it}
%
}


\maketitle
\date{\today}



\begin{abstract}
We discuss the allocation of finite resources in the presence of a logarithmic diminishing return law, in analogy to some results from Information Theory. To exemplify the problem we assume that the proposed logarithmic law is applied to the problem of how to spend our time.
\end{abstract}


\section{Aptitude, time and results}

\noindent Resource allocation is a key problem in economy \cite{Das:80}. We present some considerations inspired by Information Theory \cite{Sha:49,CovTho:B91} showing how, in general, whenever there is a diminishing return law of logarithmic type, the optimal allocation of the resources follows a water-filling behavior. 
%
To exemplify the treatment we analyze the economic problem of time allocation. 

\noindent More precisely, the problem we want to address is the following: we have some resources (time, energy,...) which can be used for different possible activities, such as:

\bigskip

$1$: playing piano

$2$: horse racing

$3$: driving motorbikes

...

$N$: working on new theorems.

\bigskip

\noindent We have a limited amount of resources - which we will indicate generally as \emph{time}. The total time available is $t_{tot}$.\footnote{This is the net time, after the effort we dedicate to mandatory activities.}

\noindent We are more skilled for some activities (they require us less \emph{time} to give good results). We will then speak about the \emph{aptitude} for the different activities (figure \ref{figATT}).

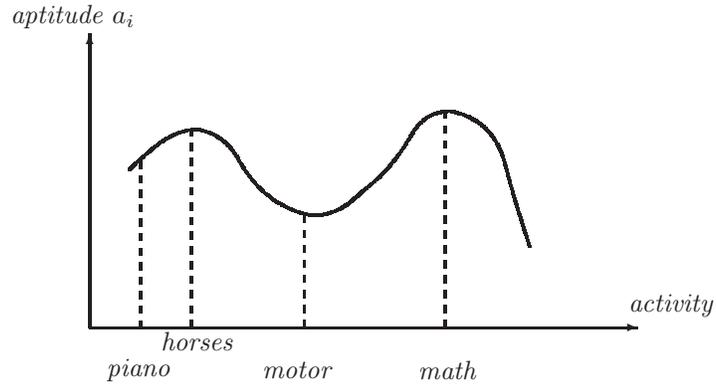
\begin{figure}[h!]
\centerline{\it \hspace{0cm} \ifx\JPicScale\undefined\def\JPicScale{1}\fi
\unitlength \JPicScale mm
\begin{picture}(105.75,51)(0,0)
\linethickness{0.3mm}
\put(28.5,9.6){\line(1,0){72.75}}
\put(101.25,9.6){\vector(1,0){0.12}}
\linethickness{0.3mm}
\put(28.5,9.6){\line(0,1){39}}
\put(28.5,48.6){\vector(0,1){0.12}}
\linethickness{0.3mm}
\put(105.75,12.6){\makebox(0,0)[cc]{activity}}

\put(26.25,51){\makebox(0,0)[cc]{aptitude $a_i$}}

\linethickness{0.3mm}
\qbezier(33.75,30.6)(33.96,30.97)(36.85,33.42)
\qbezier(36.85,33.42)(39.73,35.87)(42.75,36)
\qbezier(42.75,36)(46.47,35.34)(48.3,31.97)
\qbezier(48.3,31.97)(50.13,28.6)(53.25,26.4)
\qbezier(53.25,26.4)(54.61,25.6)(56.12,25.02)
\qbezier(56.12,25.02)(57.63,24.44)(59.25,24.6)
\qbezier(59.25,24.6)(61.61,25)(63.31,26.47)
\qbezier(63.31,26.47)(65.02,27.93)(66.75,29.4)
\qbezier(66.75,29.4)(69.34,31.88)(71.22,35.17)
\qbezier(71.22,35.17)(73.1,38.46)(76.5,38.4)
\qbezier(76.5,38.4)(82.13,37.3)(83.67,31.49)
\qbezier(83.67,31.49)(85.21,25.68)(87,20.4)
\put(35,4){\makebox(0,0)[cc]{piano}}

\put(42.75,7.8){\makebox(0,0)[cc]{horses}}

\put(56,4){\makebox(0,0)[cc]{motor}}

\put(76,4){\makebox(0,0)[cc]{math}}

\linethickness{0.3mm}
\multiput(35.25,9.6)(0,1.93){12}{\line(0,1){0.97}}
\linethickness{0.3mm}
\linethickness{0.3mm}
\multiput(42,9.6)(0,1.96){14}{\line(0,1){0.98}}
\linethickness{0.3mm}
\multiput(57,9.6)(0,2){8}{\line(0,1){1}}
\linethickness{0.3mm}
\multiput(75.75,9.6)(0,1.99){15}{\line(0,1){0.99}}
\end{picture}} 
\caption{Aptitude.} \nonumber
\label{figATT}
\end{figure}

\noindent We assume that, as often in nature, the result is related to the time we spend on an activity, but in a less than proportional way. 
In other words, if we double the time we dedicate to an activity, the results we get will grow but less that twice (diminishing return). 
\noindent In particular, we assume that the results are described by a logarithmic rule (figure \ref{figLOGA}):\footnote{This logarithmic behavior is often encountered in Information Theory \cite{Sha:49}: for example, by spending a power $P$, the amount of information that can be sent through a channel impaired by additive Gaussian noise (mutual information between the input and the output) is proportional to $\log(1+P/N)$, where $N$ is the noise power.}
\begin{equation}\label{eq:risult}
 r_i = \log\left( 1 + a_i t_i \right) 
 \end{equation}
where, for the $i-th$ activity, $r_i$ represents the \emph{result} (the higher, the better), $a_i$ the \emph{aptitude} we have for that activity, and $t_i$ the \emph{time} we dedicate to it. Note that, since $\log(1)=0$, we get a result zero if we dedicate no time to an activity.

\begin{figure}[h]
\centerline{\it \hspace{0cm} 
\ifx\JPicScale\undefined\def\JPicScale{1}\fi
\unitlength \JPicScale mm
\begin{picture}(89.29,53.33)(0,0)
\linethickness{0.3mm}
\put(25,10){\line(0,1){43.33}}
\put(25,53.33){\vector(0,1){0.12}}
\linethickness{0.3mm}
\put(89.29,6){\makebox(0,0)[cc]{time}}

\put(63,46){\makebox(0,0)[cc]{result = log(1+time*aptitude)}}

\linethickness{0.3mm}
\qbezier(25,10)(30.67,16.59)(36.08,20.85)
\qbezier(36.08,20.85)(41.5,25.1)(47.5,27.67)
\qbezier(47.5,27.67)(53.5,30.27)(59.22,31.8)
\qbezier(59.22,31.8)(64.95,33.32)(71.29,34)
\qbezier(71.29,34)(77.66,34.7)(80.6,35.02)
\qbezier(80.6,35.02)(83.54,35.34)(83.5,35.33)
\linethickness{0.3mm}
\put(25,10){\line(1,0){61.71}}
\put(86.71,10){\vector(1,0){0.12}}
\end{picture}} 
\caption{Result as a function of the spent time.} \nonumber
\label{figLOGA}
\end{figure}
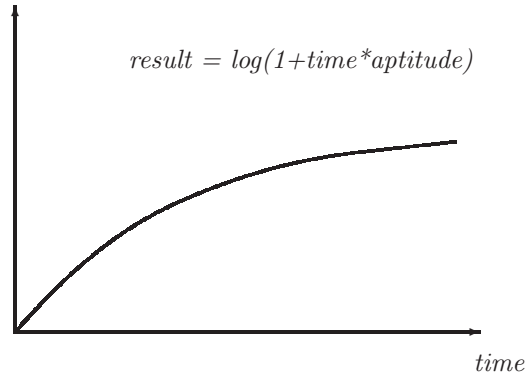

\noindent So, we have a set of {aptitudes} $a_1,\cdots, a_N$, and, if we allocate the times $t_1,\cdots,t_N$, we get the results  $r_1,\cdots,r_N$ given by  \eqref{eq:risult}. 

\noindent By spending the total time
\begin{equation}\label{eq:ttot}
t_{tot} = t_1 + t_2 + \cdots t_N
 \end{equation}
we get 
\begin{equation}\label{eq:ttot}
r_{tot} = r_1 + r_2 + \cdots r_N.
 \end{equation}
Now, the question is: how should we partition the total available time $t_{tot}$ to get the maximum of the overall result $r_{tot}$? Should we allocate more time to those activities where our skills are weaker? Or, should we spend more on those for which we have better aptitude?

\bigskip

\begin{theorem}(Resource allocation for logarithmic diminishing returns)
\label{th:wf}

In general, we must dedicate more time to the activities where we have the better aptitudes. Depending on the total available time, some of the activities where our aptitudes are worse must be abandoned. 

\bigskip

\noindent More precisely, to optimally allocate the time, there exist a minimum aptitude $a_{\min}$ such that:
\begin{itemize}
\item  all activities with $a_i<a_{\min}$ must be abandoned, i.e.:
\begin{equation}\label{eq:ti0}
t_{i} = 0 \qquad \text{if}  \qquad a_i \leq a_{\min}
 \end{equation}
\item for the activities with $a_i > a_{\min}$ we must allocate a time which is increasing with the aptitude, according to the rule
\begin{equation}\label{eq:ti}
t_{i} = 
\frac{1}{a_{\min}}-\frac{1}{a_i} >0. 
 \end{equation}
\end{itemize}
The threshold $a_{\min}$ is the value that, used in  \eqref{eq:ti0} and \eqref{eq:ti}, gives
\begin{equation*}\label{eq:amint}
t_1+t_2+\cdots +t_N = t_{tot}.
 \end{equation*}
 \end{theorem}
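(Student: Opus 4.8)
The plan is to recognize this as a constrained concave maximization and solve it via the Karush--Kuhn--Tucker (KKT) conditions, which here are both necessary and sufficient for the global optimum. The objective $r_{tot}=\sum_{i=1}^N\log(1+a_it_i)$ is concave in $(t_1,\dots,t_N)$---each summand has second derivative $-a_i^2/(1+a_it_i)^2<0$---and the feasible set, cut out by the linear budget $\sum_i t_i=t_{tot}$ together with the sign constraints $t_i\ge 0$, is convex. Concavity over a convex set means any KKT point is automatically a global maximizer, so it suffices to exhibit a solution of the KKT system.

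First I would form the Lagrangian
\[
L=\sum_{i=1}^N\log(1+a_it_i)-\lambda\Big(\sum_{i=1}^N t_i-t_{tot}\Big)+\sum_{i=1}^N\mu_i t_i,
\]
with $\lambda$ the multiplier for the budget and $\mu_i\ge 0$ the multipliers for the constraints $t_i\ge 0$. Stationarity $\partial L/\partial t_i=0$ reads
\[
\frac{a_i}{1+a_it_i}=\lambda-\mu_i,
\]
which I would pair with the complementary slackness conditions $\mu_it_i=0$.

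The core of the argument is the resulting case split. If $t_i>0$ then $\mu_i=0$, so $a_i/(1+a_it_i)=\lambda$; solving this linear equation gives $t_i=1/\lambda-1/a_i$, which is precisely \eqref{eq:ti} once the threshold is identified as $a_{\min}=\lambda$, and positivity forces $a_i>\lambda$. If instead $t_i=0$, stationarity together with $\mu_i\ge 0$ gives $a_i=\lambda-\mu_i\le\lambda$, which is the abandonment rule \eqref{eq:ti0}. Thus setting $a_{\min}=\lambda$ reproduces both branches of the theorem, and the qualitative claim (more time to higher aptitudes, drop the weakest) follows from the fact that $t_i$ in \eqref{eq:ti} increases with $a_i$.

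Finally I would determine the threshold from the budget. Substituting the optimal times back in, $a_{\min}$ must satisfy $\sum_{i:\,a_i>a_{\min}}(1/a_{\min}-1/a_i)=t_{tot}$. I expect the main work---though not a deep analytic difficulty---to be the bookkeeping of the active set: one checks that the left-hand side is continuous and strictly decreasing in $a_{\min}$ (lowering $a_{\min}$ both enlarges each active term and admits more activities), sweeping from $0$ up to $+\infty$, so the equation has a unique root and the active set $\{i:a_i>a_{\min}\}$ is self-consistent across the discrete jumps where activities enter or leave. Convexity then certifies that this unique KKT point is the sought global maximum.
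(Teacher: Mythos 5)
Your proof is correct and follows essentially the same route as the paper: a Lagrangian stationarity condition yielding the water-filling rule $t_i=1/\lambda-1/a_i$ on the active set, zero otherwise, with $a_{\min}=\lambda$ fixed by the budget. You are in fact somewhat more complete than the paper, which relegates the KKT/sufficiency issue to a footnote and does not explicitly introduce the multipliers $\mu_i$ for $t_i\ge 0$, verify concavity, or argue existence and uniqueness of the threshold via monotonicity of $a_{\min}\mapsto\sum_{i:\,a_i>a_{\min}}(1/a_{\min}-1/a_i)$ --- all of which your write-up supplies.
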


\begin{proof} (Analogous to \cite{Sha:49})
\noindent The problem consists in finding the $t_i \geq 0$ such that the overall result is maximized, under the contraint $t_1+\cdots + t_N=t_{tot}$:
\begin{eqnarray}
&& {t_1,...,t_N}=\arg\max_{t_1,...,t_N} \sum_{i=1}^{N} \log\left( 1 + a_i t_i \right)  \\
&& \nonumber \\
&with& \sum_{i=1}^{N} t_i = t_{tot} \label{eq:vincolottot}\\
& and & t_i \geq 0 .
\end{eqnarray}
%
 
\noindent By using the Lagrange's multipliers method, we set
$$
\frac{d}{d t_k} \left( \sum_{i=1}^{N} \log\left( 1 + a_i t_i\right)  - \lambda t_i  \right) = 0
$$
where $\lambda$ is the multiplier. Then, we have
$$
\frac{a_k}{1+ a_k t_k} -\lambda =0
$$
and therefore
\begin{eqnarray*}
t_k&=&\frac{1}{\lambda} -\frac{1}{a_k} \qquad \text{if greater than zero} \\
t_k&=&0  \qquad \text{if} \qquad  a_k < \lambda
\end{eqnarray*}
The value of $\lambda$ is that fulfilling \eqref{eq:vincolottot}. \footnote{We can use the Karush-Kuhn-Tucker conditions to verify that the proposed solution maximizes the global result.} In the Theorem we indicate $a_{\min}=\lambda$.
\end{proof}
  
\bigskip


\newpage

\section{Interpretation: water filling}

\noindent Similarly to \cite{Sha:49} for power allocation, the Theorem has a simple hydraulic interpretation.

\bigskip


\noindent In figure \ref{figINATT} we report the \emph{inaptitude} ($1/a_i$) for the different activities. We obtain a container with an irregular bottom, where the deepest the bottom, the largest the aptitude.

\begin{figure}[h!]
\centerline{\it \hspace{0cm} \ifx\JPicScale\undefined\def\JPicScale{1}\fi
\unitlength \JPicScale mm
\begin{picture}(95.66,54)(0,0)
\linethickness{0.3mm}
\put(25.32,10){\line(1,0){66.13}}
\put(91.45,10){\vector(1,0){0.12}}
\linethickness{0.3mm}
\put(25.32,10){\line(0,1){40.62}}
\put(25.32,50.62){\vector(0,1){0.12}}
\linethickness{0.3mm}
\put(93.5,6.25){\makebox(0,0)[cc]{activity}}

\put(23.27,53.12){\makebox(0,0)[cc]{inaptitude}}

\linethickness{0.3mm}
\qbezier(29.41,32.5)(31.18,28.91)(32.5,26.96)
\qbezier(32.5,26.96)(33.81,25)(34.86,24.37)
\qbezier(34.86,24.37)(35.92,23.71)(37.31,24.01)
\qbezier(37.31,24.01)(38.71,24.31)(40.66,25.62)
\qbezier(40.66,25.62)(42.61,26.92)(43.92,28.12)
\qbezier(43.92,28.12)(45.24,29.32)(46.11,30.62)
\qbezier(46.11,30.62)(47,31.92)(47.9,32.53)
\qbezier(47.9,32.53)(48.8,33.13)(49.86,33.12)
\qbezier(49.86,33.12)(50.92,33.13)(52.64,32.38)
\qbezier(52.64,32.38)(54.36,31.63)(57.02,30)
\qbezier(57.02,30)(59.68,28.38)(61.73,26.88)
\qbezier(61.73,26.88)(63.78,25.37)(65.54,23.75)
\qbezier(65.54,23.75)(67.31,22.11)(68.95,22.18)
\qbezier(68.95,22.18)(70.59,22.25)(72.36,24.06)
\qbezier(72.36,24.06)(74.13,25.84)(75.37,27.57)
\qbezier(75.37,27.57)(76.6,29.3)(77.48,31.25)
\qbezier(77.48,31.25)(78.36,33.2)(78.94,34.41)
\qbezier(78.94,34.41)(79.51,35.61)(79.87,36.25)
\qbezier(79.87,36.25)(80.22,36.9)(80.39,37.2)
\qbezier(80.39,37.2)(80.55,37.5)(80.55,37.5)
\qbezier(80.55,37.5)(80.55,37.5)(80.55,37.35)
\qbezier(80.55,37.35)(80.55,37.2)(80.55,36.88)
\put(33,5){\makebox(0,0)[cc]{piano}}

\put(38.27,8.12){\makebox(0,0)[cc]{horses}}

\linethickness{0.3mm}
\multiput(80.55,51.25)(0.12,-0.28){40}{\line(0,-1){0.28}}
\linethickness{0.3mm}
\put(85.32,40){\line(1,0){8.18}}
\linethickness{0.3mm}
\put(93.5,40){\line(0,1){11.25}}
\linethickness{0.2mm}
\put(81.91,48.75){\line(1,0){11.59}}
\linethickness{0.3mm}
\multiput(82.59,46.88)(1.36,2.48){1}{\multiput(0,0)(0.11,0.21){6}{\line(0,1){0.21}}}
\linethickness{0.3mm}
\multiput(83.27,45)(1.37,2.08){2}{\multiput(0,0)(0.11,0.17){6}{\line(0,1){0.17}}}
\linethickness{0.3mm}
\multiput(83.95,43.12)(0.97,1.43){4}{\multiput(0,0)(0.12,0.18){4}{\line(0,1){0.18}}}
\linethickness{0.3mm}
\multiput(84.64,41.88)(1.36,1.78){4}{\multiput(0,0)(0.11,0.15){6}{\line(0,1){0.15}}}
\linethickness{0.3mm}
\multiput(86,40.62)(1.21,1.67){5}{\multiput(0,0)(0.12,0.17){5}{\line(0,1){0.17}}}
\linethickness{0.3mm}
\multiput(87.36,40)(1.36,1.67){5}{\multiput(0,0)(0.11,0.14){6}{\line(0,1){0.14}}}
\linethickness{0.3mm}
\multiput(91.45,40)(1.37,1.25){2}{\multiput(0,0)(0.14,0.13){5}{\line(1,0){0.14}}}
\linethickness{0.3mm}
\multiput(89.41,40)(1.64,1.75){3}{\multiput(0,0)(0.12,0.13){7}{\line(0,1){0.13}}}
\put(78,54){\makebox(0,0)[cc]{water = total time}}

\put(50,5){\makebox(0,0)[cc]{motor}}

\linethickness{0.3mm}
\multiput(93.5,46.88)(0.5,-0.01){1}{\line(1,0){0.5}}
\multiput(94,46.87)(0.48,0.12){1}{\line(1,0){0.48}}
\multiput(94.48,47)(0.22,0.12){2}{\line(1,0){0.22}}
\multiput(94.91,47.24)(0.12,0.12){3}{\line(1,0){0.12}}
\multiput(95.26,47.59)(0.12,0.21){2}{\line(0,1){0.21}}
\multiput(95.51,48.02)(0.13,0.48){1}{\line(0,1){0.48}}
\put(95.64,48.5){\line(0,1){0.5}}
\multiput(95.51,49.48)(0.13,-0.48){1}{\line(0,-1){0.48}}
\multiput(95.26,49.91)(0.12,-0.21){2}{\line(0,-1){0.21}}
\multiput(94.91,50.26)(0.12,-0.12){3}{\line(1,0){0.12}}
\multiput(94.48,50.5)(0.22,-0.12){2}{\line(1,0){0.22}}
\multiput(94,50.63)(0.48,-0.12){1}{\line(1,0){0.48}}
\multiput(93.5,50.62)(0.5,0.01){1}{\line(1,0){0.5}}

\put(68,5){\makebox(0,0)[cc]{math}}

\end{picture}} 
\caption{Inaptitude.} \nonumber
\label{figINATT}
\end{figure}
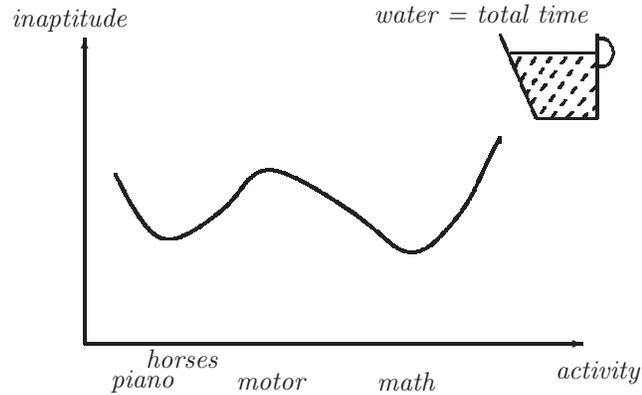
 
\noindent The total available time is represented by a certain amount of water. 

\noindent If we pour the water in the container, the depth of the water indicates the time to be dedicated to the activities.

\noindent The activities for which the inaptitude emerges above the water level must be abandoned. For the example of figure \ref{figTEMPOcont} it result that we should dedicate more time to horse-racing and math, while motorbike must be abandoned.

\begin{figure}[h!]
\psfrag{piano}{ppp}
\centerline{\it \hspace{0cm} \ifx\JPicScale\undefined\def\JPicScale{1}\fi
\unitlength \JPicScale mm
\begin{picture}(96.54,56)(0,0)
\linethickness{0.3mm}
\put(25.29,10){\line(1,0){68.71}}
\put(94,10){\vector(1,0){0.12}}
\linethickness{0.3mm}
\put(25,10){\line(0,1){43.33}}
\put(25,53.33){\vector(0,1){0.12}}
\linethickness{0.3mm}
\put(95.83,6){\makebox(0,0)[cc]{activity}}

\put(22.88,56){\makebox(0,0)[cc]{inaptitude}}

\linethickness{0.2mm}
\multiput(12.96,32)(2,0){38}{\line(1,0){1}}
\linethickness{0.3mm}
\multiput(42,28)(0,2.67){2}{\line(0,1){1.33}}
\put(42,28){\vector(0,-1){0.12}}
\put(96.54,32){\makebox(0,0)[cc]{$1/a_{\min}$}}

\linethickness{0.3mm}
\multiput(12.96,13.33)(0,1.96){10}{\line(0,1){0.98}}
\put(12.96,13.33){\vector(0,-1){0.12}}
\put(8,10){\makebox(0,0)[cc]{time to dedicate}}

\linethickness{0.3mm}
\put(90.88,10.67){\line(0,1){20.67}}
\put(90.88,31.33){\vector(0,1){0.12}}
\linethickness{0.3mm}
\qbezier(29.25,34)(31.09,30.18)(32.46,28.09)
\qbezier(32.46,28.09)(33.82,26.01)(34.92,25.33)
\qbezier(34.92,25.33)(36.01,24.63)(37.46,24.95)
\qbezier(37.46,24.95)(38.91,25.27)(40.94,26.67)
\qbezier(40.94,26.67)(42.97,28.05)(44.33,29.33)
\qbezier(44.33,29.33)(45.69,30.62)(46.6,32)
\qbezier(46.6,32)(47.52,33.39)(48.46,34.03)
\qbezier(48.46,34.03)(49.4,34.67)(50.5,34.67)
\qbezier(50.5,34.67)(51.59,34.68)(53.38,33.87)
\qbezier(53.38,33.87)(55.17,33.07)(57.94,31.33)
\qbezier(57.94,31.33)(60.7,29.6)(62.83,28)
\qbezier(62.83,28)(64.96,26.4)(66.79,24.67)
\qbezier(66.79,24.67)(68.63,22.92)(70.33,23)
\qbezier(70.33,23)(72.04,23.08)(73.88,25)
\qbezier(73.88,25)(75.72,26.9)(77,28.75)
\qbezier(77,28.75)(78.27,30.59)(79.19,32.67)
\qbezier(79.19,32.67)(80.11,34.75)(80.71,36.03)
\qbezier(80.71,36.03)(81.3,37.32)(81.67,38)
\qbezier(81.67,38)(82.04,38.7)(82.21,39.02)
\qbezier(82.21,39.02)(82.38,39.34)(82.38,39.33)
\qbezier(82.38,39.33)(82.38,39.34)(82.38,39.18)
\qbezier(82.38,39.18)(82.38,39.01)(82.38,38.67)
\linethickness{0.3mm}
\multiput(32.08,29.33)(0,1.78){2}{\line(0,1){0.89}}
\put(32.08,29.33){\vector(0,-1){0.12}}
\linethickness{0.3mm}
\multiput(33.5,27.33)(0,1.87){3}{\line(0,1){0.93}}
\put(33.5,27.33){\vector(0,-1){0.12}}
\linethickness{0.3mm}
\multiput(34.92,26)(0,2.4){3}{\line(0,1){1.2}}
\put(34.92,26){\vector(0,-1){0.12}}
\linethickness{0.3mm}
\multiput(36.33,25.33)(0,1.9){4}{\line(0,1){0.95}}
\put(36.33,25.33){\vector(0,-1){0.12}}
\linethickness{0.3mm}
\multiput(37.75,26)(0,2.4){3}{\line(0,1){1.2}}
\put(37.75,26){\vector(0,-1){0.12}}
\linethickness{0.3mm}
\multiput(39.17,26)(0,2.4){3}{\line(0,1){1.2}}
\put(39.17,26){\vector(0,-1){0.12}}
\linethickness{0.3mm}
\multiput(40.58,26.67)(0,2.13){3}{\line(0,1){1.07}}
\put(40.58,26.67){\vector(0,-1){0.12}}
\linethickness{0.3mm}
\multiput(43.42,28.67)(0,2.22){2}{\line(0,1){1.11}}
\put(43.42,28.67){\vector(0,-1){0.12}}
\linethickness{0.3mm}
\multiput(44.83,30)(0,4){1}{\line(0,1){2}}
\put(44.83,30){\vector(0,-1){0.12}}
\linethickness{0.3mm}
\multiput(74.58,26.67)(0,2.13){3}{\line(0,1){1.07}}
\put(74.58,26.67){\vector(0,-1){0.12}}
\linethickness{0.3mm}
\multiput(64.67,28)(0,2.67){2}{\line(0,1){1.33}}
\put(64.67,28){\vector(0,-1){0.12}}
\linethickness{0.3mm}
\multiput(66.08,26.67)(0,2.13){3}{\line(0,1){1.07}}
\put(66.08,26.67){\vector(0,-1){0.12}}
\linethickness{0.3mm}
\multiput(67.5,25.33)(0,1.9){4}{\line(0,1){0.95}}
\put(67.5,25.33){\vector(0,-1){0.12}}
\linethickness{0.3mm}
\multiput(68.92,23.33)(0,1.93){5}{\line(0,1){0.96}}
\put(68.92,23.33){\vector(0,-1){0.12}}
\linethickness{0.3mm}
\multiput(70.33,23.33)(0,1.93){5}{\line(0,1){0.96}}
\put(70.33,23.33){\vector(0,-1){0.12}}
\linethickness{0.3mm}
\multiput(71.75,24)(0,2.29){4}{\line(0,1){1.14}}
\put(71.75,24){\vector(0,-1){0.12}}
\linethickness{0.3mm}
\multiput(73.17,25.33)(0,1.9){4}{\line(0,1){0.95}}
\put(73.17,25.33){\vector(0,-1){0.12}}
\linethickness{0.3mm}
\multiput(76,28)(0,2.67){2}{\line(0,1){1.33}}
\put(76,28){\vector(0,-1){0.12}}
\linethickness{0.3mm}
\multiput(77.42,30)(0,4){1}{\line(0,1){2}}
\put(77.42,30){\vector(0,-1){0.12}}
\linethickness{0.3mm}
\multiput(59,30.67)(0,2.67){1}{\line(0,1){1.33}}
\put(59,30.67){\vector(0,-1){0.12}}
\linethickness{0.3mm}
\multiput(60.42,30)(0,4){1}{\line(0,1){2}}
\put(60.42,30){\vector(0,-1){0.12}}
\linethickness{0.3mm}
\multiput(61.83,29.33)(0,1.78){2}{\line(0,1){0.89}}
\put(61.83,29.33){\vector(0,-1){0.12}}
\linethickness{0.3mm}
\multiput(63.25,28.67)(0,2.22){2}{\line(0,1){1.11}}
\put(63.25,28.67){\vector(0,-1){0.12}}
\put(33,5){\makebox(0,0)[cc]{piano}}

\put(38.46,8){\makebox(0,0)[cc]{horses}}

\put(51,5){\makebox(0,0)[cc]{motor}}

\put(70,5){\makebox(0,0)[cc]{math}}

\end{picture}} 
\caption{Optimal time partition by water-filling. The total quantity of water is the available total time. The depth of the water indicates the time to dedicate to the activities. 
} \nonumber
\label{figTEMPOcont}
\end{figure}
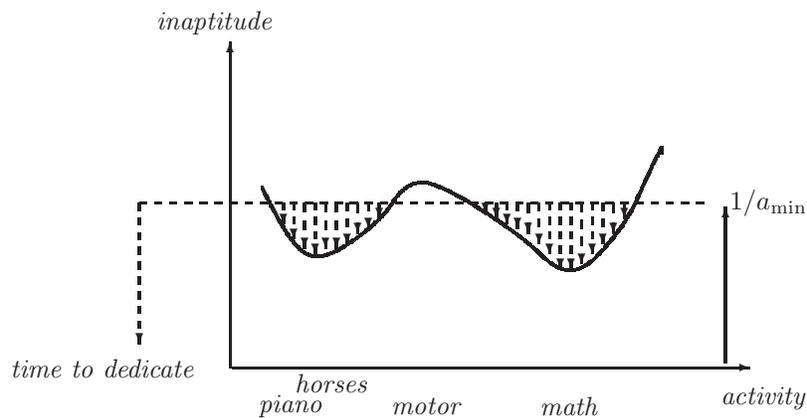

\bigskip

\noindent Therefore, if we have a lot of time we can cover all activities (but still the higher the aptitude for an activity, the larger the time we should dedicate to it).

\noindent On the other extreme, if we have a small amount of total time we should invest it on the activities where we have the best aptitude. This is the case illustrated in figure  \ref{figTEMPOcontPOCO}.

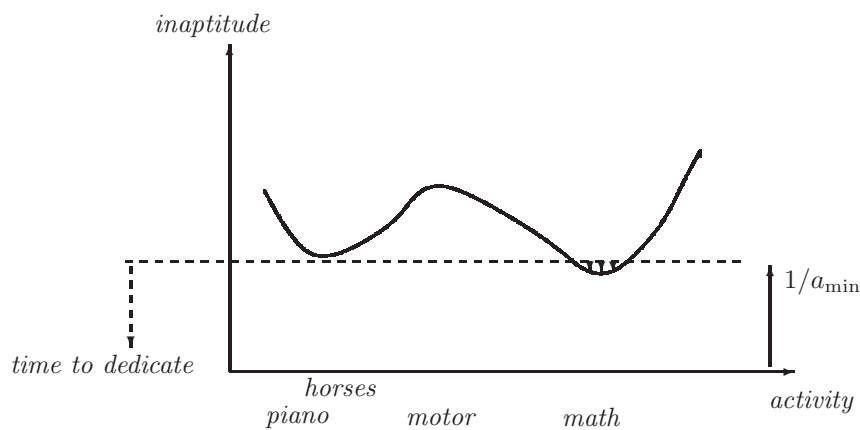
\begin{figure}[h!]
\centerline{\it \hspace{0cm} \ifx\JPicScale\undefined\def\JPicScale{1}\fi
\unitlength \JPicScale mm
\begin{picture}(103.82,56)(0,0)
\linethickness{0.3mm}
\put(25,10){\line(1,0){74.95}}
\put(99.95,10){\vector(1,0){0.12}}
\linethickness{0.3mm}
\put(25,10){\line(0,1){43.33}}
\put(25,53.33){\vector(0,1){0.12}}
\linethickness{0.3mm}
\put(102.27,6){\makebox(0,0)[cc]{activity}}

\put(22.68,56){\makebox(0,0)[cc]{inaptitude}}

\linethickness{0.2mm}
\multiput(11.09,24.67)(2.02,0){41}{\line(1,0){1.01}}
\put(103.82,22){\makebox(0,0)[cc]{$1/a_{\min}$}}

\linethickness{0.3mm}
\multiput(11.86,13.33)(0,1.94){6}{\line(0,1){0.97}}
\put(11.86,13.33){\vector(0,-1){0.12}}
\put(8,11){\makebox(0,0)[cc]{time to dedicate}}

\linethickness{0.3mm}
\put(96.86,10.67){\line(0,1){13.33}}
\put(96.86,24){\vector(0,1){0.12}}
\linethickness{0.3mm}
\qbezier(29.64,34)(31.65,30.18)(33.13,28.17)
\qbezier(33.13,28.17)(34.62,26.17)(35.82,25.67)
\qbezier(35.82,25.67)(37.01,25.14)(38.6,25.46)
\qbezier(38.6,25.46)(40.18,25.78)(42.39,27)
\qbezier(42.39,27)(44.6,28.21)(46.09,29.41)
\qbezier(46.09,29.41)(47.57,30.62)(48.57,32)
\qbezier(48.57,32)(49.57,33.39)(50.59,34.03)
\qbezier(50.59,34.03)(51.61,34.67)(52.82,34.67)
\qbezier(52.82,34.67)(54.01,34.68)(55.96,33.87)
\qbezier(55.96,33.87)(57.91,33.07)(60.93,31.33)
\qbezier(60.93,31.33)(63.94,29.6)(66.27,28)
\qbezier(66.27,28)(68.59,26.4)(70.59,24.67)
\qbezier(70.59,24.67)(72.6,22.92)(74.45,23)
\qbezier(74.45,23)(76.31,23.08)(78.32,25)
\qbezier(78.32,25)(80.33,26.9)(81.72,28.75)
\qbezier(81.72,28.75)(83.12,30.59)(84.11,32.67)
\qbezier(84.11,32.67)(85.12,34.75)(85.77,36.03)
\qbezier(85.77,36.03)(86.42,37.32)(86.82,38)
\qbezier(86.82,38)(87.22,38.7)(87.41,39.02)
\qbezier(87.41,39.02)(87.59,39.34)(87.59,39.33)
\qbezier(87.59,39.33)(87.59,39.34)(87.59,39.18)
\qbezier(87.59,39.18)(87.59,39.01)(87.59,38.67)
\linethickness{0.3mm}
\multiput(72.91,23.33)(0,2.67){1}{\line(0,1){1.33}}
\put(72.91,23.33){\vector(0,-1){0.12}}
\linethickness{0.3mm}
\multiput(74.45,23.33)(0,2.67){1}{\line(0,1){1.33}}
\put(74.45,23.33){\vector(0,-1){0.12}}
\linethickness{0.3mm}
\multiput(76,23.33)(0,2.67){1}{\line(0,1){1.33}}
\put(76,23.33){\vector(0,-1){0.12}}
\put(34,4){\makebox(0,0)[cc]{piano}}

\put(39.68,8){\makebox(0,0)[cc]{horses}}

\put(53,4){\makebox(0,0)[cc]{motor}}

\put(73,4){\makebox(0,0)[cc]{math}}

\end{picture}} 
\caption{If the amount of time is small, we should better spend it on the activity where we have the best skills (in this case to math).} \nonumber
\label{figTEMPOcontPOCO}
\end{figure}

\bigskip

\section{Conclusions}

\noindent Assuming limited resources and a logarithmic relation between the product (aptitude*time) and the corresponding result, we should:
 
\begin{itemize}
\item abandon those activities for which the aptitude is below a certain threshold (the threshold depending on the total available time);
\item spend more time for the activities where we have the better aptitudes;
\item if the amount of available time is small, just dedicate it to the activity with the best aptitude.
\end{itemize}

%
%
%

\bibliographystyle{IEEEtran}

\end{document}